\newcommand{\OR}{\mathrm{OR}}
\newcommand{\norm}[1]{\|#1\|}
\newcommand{\PM}{\{\pm1\}}
\newcommand{\sq}{\mathcal{S}}
\newcommand{\eps}{\varepsilon}
\newcommand{\adeg}{\widetilde{\mathrm{deg}}}
\newtheorem{theorem}{Theorem}
\newtheorem{lemma}[theorem]{Lemma}
\newtheorem{corollary}[theorem]{Corollary}
\newtheorem{definition}[theorem]{Definition}
\newcommand{\thmref}[1]{\hyperref[#1]{{Theorem~\ref*{#1}}}}
\newcommand{\lemref}[1]{\hyperref[#1]{{Lemma~\ref*{#1}}}}
\newcommand{\corref}[1]{\hyperref[#1]{{Corollary~\ref*{#1}}}}
\newcommand{\defref}[1]{\hyperref[#1]{{Definition~\ref*{#1}}}}
\begin{document}
\title{A Dual Polynomial for OR}
\author{%
Robert {\v S}palek\\
Google, Inc.%
\thanks{Most of the work conducted while at CWI, Amsterdam, in February 2003.}\\
{\tt spalek@google.com}
}
\date{}
\maketitle

\begin{abstract}
We reprove that the approximate degree of the OR function on $n$ bits is $\Omega(\sqrt n)$.  We consider a linear program which is feasible if and only if there is an approximate polynomial for a given function, and apply the duality theory.  The duality theory says that the primal program has no solution if and only if its dual has a solution.  Therefore one can prove the nonexistence of an approximate polynomial by exhibiting a dual solution, coined the \emph{dual polynomial}.  We construct such a polynomial.
\end{abstract}

\section{Introduction}

We study the approximation of Boolean functions by real-valued polynomials.  This line of research was initiated by Minsky and Papert \cite{mp:perceptrons}.  An $n$-bit Boolean function $f$ is represented by a multivariate polynomial $p(x_1, \dots, x_n)$.
Nisan and Szegedy \cite{nisan&szegedy:degree} defined the approximate degree of a function $f$ under the $\ell_\infty$-norm, denoted $\adeg(f)$, as the smallest degree for which there exists a polynomial that is close to the function pointwise.  Several complexity measures have been since shown to be lower-bounded in terms of $\adeg(f)$: circuit size \cite{beigel:polmethod}, or quantum query complexity \cite{bbcmw:polynomialsj}.  Consider the OR function on $n$ bits.  Nisan and Szegedy \cite{nisan&szegedy:degree} showed that $\adeg(\OR_n) = \Theta(\sqrt n)$, and Paturi \cite{paturi:degree} extended their bound to all symmetric functions.

The existence of an approximate polynomial can be described by a linear program; let us coin it the primal program.  Using the duality theory of linear programming, one can show the non-existence of an approximate polynomial for a function $f$ by exhibiting a solution to its dual program, a so-called \emph{dual polynomial} for $f$.  Recently, several papers have appeared that use dual polynomials to prove good communication complexity lower bounds: Sherstov \cite{She07b} and Shi and Zhu \cite{sz:qcomm-block-composed-f} show two-party quantum communication lower bounds, and Lee and Shraibman \cite{ls:nof} and Chattopadhyay and Ada \cite{ca:disj-nof} show multi-party randomized communication lower bounds in the number-on-the-forehead model.  The basic idea of these papers is as follows.  One defines a special \emph{pattern matrix} (or tensor in the multi-party case) whose entries are values of a certain polynomial.  The structure of the pattern matrix allows one to relate properties of the polynomial to properties of the matrix, such as its trace norm.  The pattern matrix formed from the dual polynomial forms a witness to the large trace norm of the matrix.
The communication complexity is then lower-bounded in terms of the trace norm.  None of these papers actually presents an explicit dual polynomial for any function; they only use its existence and some inequalities guaranteed by the duality principle from the known bounds on the approximate degree.

It is natural to ask what a dual polynomial looks like for the simplest functions.  In this short note, we address this question and present an asymptotically optimal dual polynomial for the OR function.  Our proof extends the ideas of Buhrman and Szegedy \cite{bs:dualor}.

\section{Preliminaries}

\subsection{Symmetric polynomials}

We represent Boolean functions by polynomials in the Fourier basis, where $+1$ corresponds to the logical value $0$ (false) and $-1$ to the logical value $1$ (true).  In this basis, multiplication corresponds to the exclusive OR.
We say that $f:\PM^n \to \PM$ is a \emph{symmetric} function, if $f(x) = f(x_\sigma)$ for every permutation $\sigma \in S_n$ and $x \in \PM^n$, where $x_\sigma$ denotes a $\sigma$-permuted version of $x$, with $(x_\sigma)_i = x_{\sigma(i)}$.

Let $p: \PM^n \to \Re$ be a polynomial in variables $x_1, \dots, x_n$.  Since $x_i^2=1$, we can restrict ourselves to \emph{multilinear} polynomials, where each variable appears with degree at most 1.  We say that $p$ has \emph{degree} $d$ and \emph{pure high degree} $d'$, if each term in $p$ is a product of at most $d$ and at least $d'$ variables.
We say that $p$ is an \emph{$\eps$-approximation} for a function $f$, if $|p(x) - f(x)| \le \eps$ for every $x \in \PM^n$.
If $p$ is an $\eps$-approximation of a symmetric function $f$, then there exists a symmetric polynomial $p'$ with the same degree, pure high degree, and approximation factor: $p'(x) = \frac 1 {n!} \sum_{\sigma \in S_n} p(x_\sigma)$.

Let $[n] = \{0, 1, \dots, n\}$.  Given a symmetric function $f: \PM^n \to \PM$, one can define a single-variate function $F: [n] \to \PM$ such that $f(x) = F(|x|)$, where $|x| = \frac {n- (x_1 + \dots + x_n)} 2$ is the Hamming weight of $x$, i.e., the number of minuses in $x$.  Analogously, following \cite{mp:perceptrons}, given a symmetric multilinear polynomial $p: \PM^n \to \Re$, one can define a \emph{single-variate} polynomial $P: [n] \to \Re$ of the same degree such that
\begin{align*}
P(k) &= p(\underbrace{-1, \dots, -1}_k,\ \underbrace{+1, \dots, +1}_{n-k})
	&& \mbox{for all $k \in [n]$,} \\
p &= P\left(\frac {n - (x_1 + \dots + x_n)} 2\right)
	&& \mbox{mod $(x_1^2-1)$, \quad\dots,\quad mod $(x_n^2-1)$.}
\end{align*}
Note that the pure high degree of $p$ does not correspond to the smallest degree of a $k$-term in $P(k)$.  When we talk about the pure high degree of a single-variate polynomial, we mean the pure high degree of its corresponding multilinear polynomial.  

Let $p, q: \PM^n \to \Re$.  Define a \emph{scalar product} as $p \cdot q = \sum_{x \in \PM^n} p(x) q(x)$.  This induces a scalar product $P \cdot Q = \sum_{i=0}^n \binom n i P(i) Q(i)$ on the space of symmetric polynomials.  Similarly, the \emph{$\ell_1$-norm} $\norm p_1 = \sum_{x \in \PM^n} |p(x)|$ induces an $\ell_1$-norm $\norm P_1 = \sum_{i=0}^n \binom n i |P(i)|$.

Let $p: \PM^n \to \Re$ be a multilinear polynomial of degree $d$ and pure high degree $d'$, and consider $q(x) = p(x) \cdot (x_1 \cdots x_n) \mod (x_i^2-1)$.  In the functional interpretation, $q(x)$ equals \emph{$p(x)$ multiplied by the parity of $x$}.  Thanks to the term cancellation $x_i^2 = 1$, each term in $q$ corresponds to the complement of a term in $p$, and therefore $q$ has degree $n-d'$ and pure high degree $n-d$.
Now, assume that $p$ (and thus also $q$) are symmetric, and consider their corresponding single-variate polynomials $P, Q$.   Then $Q(k) = P(k) \cdot (-1)^k$, and
the degree of $P$ corresponds to $n$ minus the pure high degree of $Q$ and vice versa.

\subsection{Linear program for polynomial approximation}

\begin{theorem}
A total Boolean function $f: \PM^n \to \PM$ has $\eps$-approximate degree at least $d$ if and only if there exists a polynomial $b: \PM^n \to \Re$ with pure high degree $d$ such that $\frac {\norm b_1} {b \cdot f} < \frac 1 \eps$.
\end{theorem}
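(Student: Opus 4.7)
The plan is a direct application of LP duality. First, I would recast the statement \emph{``$f$ has $\eps$-approximate degree at least $d$''} as the claim that the linear program
\[
\min \ \eps' \ \text{ s.t. } \ -\eps' \le \textstyle\sum_{|S|<d} c_S\chi_S(x) - f(x) \le \eps' \text{ for every } x\in\PM^n,
\]
with free variables $\eps'\in\Re$ and $\{c_S\}_{|S|<d}\subset\Re$ (where $\chi_S(x):=\prod_{i\in S}x_i$), has optimum value strictly greater than $\eps$. I would split the two-sided inequality into two one-sided ones and attach nonnegative dual multipliers $\alpha_x$ and $\beta_x$ to them.

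Dualizing is then mechanical: the free variable $\eps'$ forces $\sum_x(\alpha_x+\beta_x)=1$, each free $c_S$ forces $\sum_x(\alpha_x-\beta_x)\chi_S(x)=0$ for $|S|<d$, and the dual objective is $\sum_x(\beta_x-\alpha_x)f(x)$. Setting $b(x):=\beta_x-\alpha_x$ and observing that for any dual-feasible solution one may replace $\alpha_x,\beta_x$ by $\alpha_x-\min(\alpha_x,\beta_x)$ and $\beta_x-\min(\alpha_x,\beta_x)$, which preserves $b$, the objective, and the orthogonality constraints while only shrinking $\sum_x(\alpha_x+\beta_x)$, the normalization becomes $\|b\|_1=1$. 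Via the Fourier expansion $b=\sum_T \hat b(T)\chi_T$, the relations $\sum_x b(x)\chi_S(x)=0$ for $|S|<d$ are equivalent to $\hat b(S)=0$ for $|S|<d$, which is precisely the statement that $b$, viewed as a multilinear polynomial, has pure high degree $d$. The objective rewrites as $b\cdot f$.

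Both programs are trivially feasible (take $p\equiv 0,\ \eps'=1$ in the primal, $b\equiv 0$ in the dual), so strong LP duality applies and their optima agree. Hence the primal exceeds $\eps$ iff the dual admits some $b$ of pure high degree $d$ with $\|b\|_1\le 1$ and $b\cdot f>\eps$. By homogeneity this is equivalent to the existence of $b$ of pure high degree $d$ satisfying $\|b\|_1/(b\cdot f)<1/\eps$, which is the claim.

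No step is a genuine obstacle, since the argument amounts to LP bookkeeping. The mildly subtle point worth highlighting is the translation between the Fourier-side orthogonality conditions $\sum_x b(x)\chi_S(x)=0$ for $|S|<d$, produced mechanically by the dual LP, and the stated property that $b$ has pure high degree $d$ as a multilinear polynomial.
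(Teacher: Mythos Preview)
Your proof is correct and follows essentially the same LP-duality route as the paper; the only cosmetic difference is that the paper phrases the primal as a \emph{feasibility} LP (and uses ``primal infeasible iff dual feasible''), whereas you phrase it as an \emph{optimization} LP and invoke strong duality. One small slip worth noting: your disjoint-support replacement turns the equality $\sum_x(\alpha_x+\beta_x)=1$ into $\le 1$ rather than $=1$, but this is harmless since the objective is homogeneous in $b$ and your final homogeneity step already absorbs it.
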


\begin{proof}
$f$ can be $\eps$-approximated by a polynomial of degree $d-1$ is equivalent to the feasibility of the following primal linear program.  Consider the \emph{Fourier basis} on the space of multilinear polynomials: $\{ \chi_S \}_{S \subseteq \{1, \dots, n\}}$, where $\chi_S(x) = \prod_{i \in S} x_i$.  Let $F = \{ \chi_S(x) \}_{x,S}$ denote the Fourier transform over $\mathbb{Z}_2^n$, indexed by $\PM^n$ and $S \subseteq \{1, \dots, n\}$, and let $a$ denote a vector of Fourier coefficients.
\begin{align*}
F a &\ge f - \eps \\
F a &\le f + \eps \\
a_S &= 0 \mbox{ for } |S| \ge d
\end{align*}

The primal program is unfeasible if and only if its dual is feasible.  The dual program is as follows.
\[
\begin{array}{r@{\ }l}
(b^+ - b^-) \cdot f &> (b^+ + b^-) \cdot \eps \\
(b^+ - b^-) F &= c \\
b^+, b^- &\ge 0 \\
c_S &= 0 \mbox{ for } |S| < d
\end{array}
\quad \Longleftrightarrow \quad
\begin{array}{r@{\ }l}
b \cdot f &> |b| \cdot \eps \\
b F &= c \\
c_S &= 0 \mbox{ for } |S| < d
\end{array}
\]
We can assume that $b^+$ and $b^-$ of the optimal solution are disjoint, i.e., $b^+(x) b^-(x) = 0$ for each $x$, otherwise we could lower the right-hand side of the first inequality by subtracting the same constant $\min(b^+(x), b^-(x)) > 0$ from both $b^+(x)$ and $b^-(x)$, and the remaining expressions would stay unchanged.  Let $b = b^+ - b^-$ and $|b| = b^+ + b^-$.  The constraints $b F = c$ and $c_S = 0$ for $|S| < d$ say that $b$ has pure high degree $d$.  The dual is feasible if and only if there exists such a $b$ with $b \cdot f > |b| \cdot \eps = \eps \norm b_1$.
\end{proof}

Note that if $f$ is symmetric, then it suffices to look for a dual polynomial $b$ in the space of symmetric polynomials.  Let us reformulate the condition in the language of single-variate polynomials.

\begin{corollary} \label{cor:dualsym}
A total symmetric Boolean function $F: [n] \to \PM$ has $\eps$-approximate degree at least $d$ if and only if there exists a polynomial $B: [n] \to \Re$ with pure high degree $d$ such that $\frac {\norm B_1} {B \cdot F} < \frac 1 \eps$.
\end{corollary}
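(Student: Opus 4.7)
The plan is to deduce the corollary from the preceding theorem by combining the symmetrization observation highlighted in the preliminaries with the dictionary between symmetric multilinear polynomials on $\PM^n$ and univariate polynomials on $[n]$.

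For the forward direction, let $f:\PM^n \to \PM$ be the symmetric Boolean function corresponding to $F$. By the theorem, $\adeg(f) \ge d$ yields a multilinear polynomial $b$ of pure high degree $d$ with $b \cdot f > \eps \norm{b}_1$. I would then replace $b$ by its symmetrization $b'(x) = \frac{1}{n!} \sum_{\sigma \in S_n} b(x_\sigma)$. Each permutation sends a monomial $\chi_S$ to $\chi_{\sigma(S)}$, which has the same cardinality, so $b'$ is again multilinear with pure high degree at least $d$. The invariance $f(x_\sigma)=f(x)$ together with a change of summation variable yields $b' \cdot f = b \cdot f$, while the triangle inequality gives $\norm{b'}_1 \le \norm{b}_1$; hence $b' \cdot f > \eps \norm{b'}_1$ as well. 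Let $B$ be the univariate polynomial corresponding to $b'$; by the convention from the preliminaries, $B$ has pure high degree $d$. Since $b'$ and $f$ depend only on Hamming weight, grouping the sums over $\PM^n$ by Hamming-weight class gives $b' \cdot f = \sum_{i=0}^{n}\binom{n}{i} B(i) F(i) = B \cdot F$ and $\norm{b'}_1 = \sum_{i=0}^{n} \binom{n}{i} |B(i)| = \norm{B}_1$, so $\norm{B}_1 / (B \cdot F) < 1/\eps$, as claimed.

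For the converse, I would start from such a $B$ and lift it to the unique multilinear symmetric $b$ with that univariate representation; by construction $b$ has pure high degree $d$, and the same weight-class identities translate $B \cdot F > \eps \norm{B}_1$ back into $b \cdot f > \eps \norm{b}_1$. The theorem then produces the lower bound $\adeg(f) \ge d$, which is the same as $\adeg(F) \ge d$.

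No step here is genuinely hard; the only point that calls for a moment's care is verifying that symmetrization does not lower the pure high degree, and this is immediate once one observes that the action of $S_n$ merely permutes the monomials of each fixed size among themselves.
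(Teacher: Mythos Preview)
Your proposal is correct and follows exactly the approach the paper indicates: the paper does not give a separate proof of the corollary but merely remarks, just before stating it, that for symmetric $f$ one may restrict to symmetric dual polynomials and then pass to the single-variate formulation. Your write-up fleshes out precisely these two steps (symmetrization preserves $b\cdot f$, does not increase $\norm{b}_1$, and does not lower the pure high degree; then the weighted inner product and $\ell_1$-norm on $[n]$ are by definition those induced from $\PM^n$), so it is essentially the same argument made explicit.
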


\section{Dual polynomial for OR}

First, we define a certain low-degree polynomial $P$ and show that its norm $\norm P_1$ is not too large compared to its value $P(0)$.  This polynomial will be crucial for defining the dual polynomial for OR.  The design of our polynomial comes from extending the ideas of Buhrman and Szegedy \cite{bs:dualor}.

\begin{definition} \label{def:P}
Let $m = \lfloor \sqrt n \rfloor$ and
let $\sq = \{ i^2 : i \in [m] \} \cup \{ 2 \}$ denote the set containing the integer squares up to $n$ and the number $2$.
Define a polynomial
\[
P(x) =
	2 (-1)^{n-m-1} \frac {m!^2} {n!} \cdot
	\prod_{i \in [n] - \sq} (x-i) \enspace.
\]
\end{definition}

The multiplicative factor of $P$ is chosen such that $P(0) = 1$.  The degree of $P$ is $n-m-1$.

\begin{lemma}
\label{lem:binom-bound}
For every pair of integers $k, m$ with $k \le m$, $\frac {m!^2} {(m+k)! (m-k)!} \le
1$.
\end{lemma}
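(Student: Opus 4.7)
The plan is to rewrite the given ratio as a quotient of two binomial coefficients with a common top index, and then invoke the unimodality of the middle row of Pascal's triangle.

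Concretely, I would first multiply numerator and denominator by $(2m)!$ to obtain
\[
\frac{m!^2}{(m+k)!(m-k)!}
= \frac{(2m)!/((m-k)!(m+k)!)}{(2m)!/(m!\,m!)}
= \frac{\binom{2m}{m-k}}{\binom{2m}{m}}.
\]
Since $0 \le m-k \le m \le 2m$, the binomial coefficient $\binom{2m}{j}$ is maximized at $j = m$ (the central entry of row $2m$ in Pascal's triangle), so the numerator is at most the denominator and the ratio is at most $1$.

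If one prefers a proof that avoids quoting unimodality, I would instead write the ratio as a product of $k$ telescoped factors,
\[
\frac{m!^2}{(m+k)!(m-k)!}
= \prod_{i=0}^{k-1} \frac{m-i}{m+i+1},
\]
and observe that each factor is at most $1$ because $m-i \le m+i+1$ for every $i \ge 0$ (which follows from $0 \le 2i+1$). No step here is really an obstacle; the only thing to be careful about is that the formula requires $k \le m$ so that all factorials are well-defined and all numerator factors are nonnegative, which is exactly the hypothesis. Either form fits naturally and can be stated in a couple of lines.
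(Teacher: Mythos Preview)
Your proposal is correct, and your second argument---writing the ratio as a product of $k$ factors each at most~$1$---is exactly the approach the paper takes (the paper pairs the factors slightly differently, as $\prod_{i=1}^k \frac{m+i-k}{m+i}$, but this is the same telescoping idea). Your first argument via $\binom{2m}{m-k}/\binom{2m}{m}$ and unimodality is a clean equivalent reformulation.
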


\begin{proof}
The term is a product of numbers that are all smaller than 1:
\[
\frac {m!^2} {(m+k)! (m-k)!} =
	\frac {m (m-1) \dots (m-k+1)} {(m+k) (m+k-1) \dots (m+1)} =
	\prod_{i=1}^k \left(1 - \frac k {m+i} \right) \le 1
\qedhere
\]
\end{proof}

\begin{lemma}
\label{lem:pbound}
$\binom n 2 |P(2)| \le 12$ and $\binom{n}{k^2} |P(k^2)| \le \frac 8 {k^2}$ for every $k=1,2,\dots,m$.
\end{lemma}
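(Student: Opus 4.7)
The plan is to collapse $\binom{n}{c}|P(c)|$ to a clean expression whenever $c \in \sq$, and then handle the two cases $c = k^2$ and $c = 2$ separately. Substituting the definition of $P$ and using the elementary identity $\prod_{i \in [n],\, i \ne c}|c-i| = c!\,(n-c)!$, the binomial $\binom{n}{c}$ cancels against $c!\,(n-c)!/n!$, leaving
\[
\binom{n}{c}\,|P(c)| \;=\; \frac{2\,m!^{2}}{\prod_{j \in \sq,\, j \ne c}|c-j|}.
\]
It then suffices to lower-bound the denominator in each of the two cases.

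For the square cases $c = k^2$ with $1 \le k \le m$, I would split the denominator as $|k^2-2|\cdot\prod_{j \in [m],\, j \ne k}|k^2 - j^2|$ and factor $|k^2 - j^2| = |k-j|\,(k+j)$. Direct evaluation gives $\prod_{j \ne k}|k-j| = k!\,(m-k)!$ and $\prod_{j \ne k}(k+j) = (m+k)!/(2\,k!)$, so the product over squares equals $(m-k)!\,(m+k)!/2$. Invoking \lemref{lem:binom-bound} then yields $\binom{n}{k^2}|P(k^2)| \le 4/|k^2 - 2|$, which is $4 \le 8$ for $k=1$ and, using $k^2 - 2 \ge k^2/2$ for $k \ge 2$, at most $8/k^2$ for $k \ge 2$.

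For $c = 2$, the denominator becomes $\prod_{j \in [m]}|2 - j^2| = 2 \cdot 1 \cdot \prod_{j=2}^m(j^2 - 2) = 2\,m!^{2}\prod_{j=2}^m(1 - 2/j^2)$, so
\[
\binom{n}{2}\,|P(2)| \;=\; \frac{1}{\prod_{j=2}^m(1 - 2/j^2)}.
\]
Each factor lies in $(0,1)$, so the product is decreasing in $m$ with a strictly positive limit, and it therefore suffices to lower-bound the infinite product by $1/12$. One clean route is Euler's formula $\prod_{j=1}^\infty(1 - x^2/j^2) = \sin(\pi x)/(\pi x)$ at $x = \sqrt 2$, whose value on $j \ge 2$ is $-\sin(\pi\sqrt 2)/(\pi\sqrt 2) \approx 0.22 > 1/12$. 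A self-contained alternative is to compute the first few factors explicitly and apply $\prod(1-a_j) \ge 1 - \sum a_j$ to the tail, with $\sum_{j \ge J}2/j^2 \le 2/(J-1)$.

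The main obstacle is the $c = 2$ estimate: the square cases reduce cleanly to \lemref{lem:binom-bound} through standard factorial telescoping, but the extra point $2 \in \sq$ breaks the pure factorial structure and forces an analytic tail bound on $\prod(1 - 2/j^2)$. This is why the lemma states a loose constant $12$ for the $P(2)$ term while the $P(k^2)$ terms enjoy the sharper $O(1/k^2)$ decay.
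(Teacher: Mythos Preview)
Your argument is correct, and for the square points $c=k^{2}$ it is essentially identical to the paper's proof: both rewrite the product over $[n]\setminus\sq$ as a ratio, collapse it to $\frac{4}{|k^{2}-2|}\cdot\frac{m!^{2}}{(m+k)!\,(m-k)!}$, and finish with \lemref{lem:binom-bound} together with $|k^{2}-2|\ge k^{2}/2$.

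The only genuine difference is the treatment of $c=2$. You reduce to the exact identity $\binom{n}{2}|P(2)|=\bigl(\prod_{j=2}^{m}(1-2/j^{2})\bigr)^{-1}$ and then lower-bound the infinite product, either via Euler's sine product at $x=\sqrt 2$ or by computing a few initial factors and controlling the tail with $\prod(1-a_j)\ge 1-\sum a_j$. The paper instead stays purely within factorial arithmetic: it uses the crude but convenient inequality $j^{2}-2>j^{2}-4=(j-2)(j+2)$ for $j\ge 3$, after which $\prod_{j=3}^{m}(j^{2}-4)=(m-2)!\,(m+2)!/4!$ telescopes and \lemref{lem:binom-bound} (with $k=2$) finishes, giving the clean constant $24/\bigl(n(n-1)\bigr)=12/\binom{n}{2}$. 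Your route yields a sharper numerical constant (roughly $4.6$ rather than $12$) at the cost of invoking Euler's product or an explicit tail estimate; the paper's route is entirely elementary and keeps the whole lemma self-contained with a single inequality trick plus \lemref{lem:binom-bound}.
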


\begin{proof}
First, we substitute $x=2$ into $|P(x)|$ and rewrite the product over $i \in [n] - \sq$ as the ratio of two products, one over $i \in [n] - \{0,1,2\}$ and one over $i \in \sq - \{0,1,2\}$.  We then pull the $j=2$ term out of the product in the denominator, use $|j^2-2| < j^2-4$, and apply \lemref{lem:binom-bound}.
\begin{align*}
|P(2)| &= 2 \frac {m!^2} {n!} \frac {(n-2)!} {2 \prod_{j=3}^m |2-j^2|} <
	\frac {m!^2} {n!} \frac {(n-2)!} {\prod_{j=3}^m (j^2-4)} \\
&= \frac {m!^2} {n!} \frac {(n-2)!} {\prod_{j=3}^m (j+2) (j-2)} =
	\frac 1 {n(n-1)} \frac {m!^2} {\frac {(m+2)!} {4!} (m-2)!}
\le \frac {4!} {n(n-1)} = \frac {12} {\binom n2}.
\end{align*}

Second, we substitute $x=k^2$ to $|P(x)|$ and rewrite the product over $i \in [n] - \sq$ as the ratio of two products, one over $i \in [n] - \{k^2\}$ and one over $i \in \sq - \{k^2\}$. The term $i=k^2$ does not appear in any of products, because it is $0$.
\begin{align*}
|P(k^2)| &= 2\frac {m!^2} {n!} \cdot
	\frac {\prod_{\substack{i \in[n] \\ i \ne k^2}} |k^2-i|}
	{|k^2-2| \cdot \prod_{\substack{j \in [m] \\ j \ne k}} (k+j) |k-j|} \\
&= 2 \frac {m!^2} {n!} \cdot
	\frac {k^2! (n-k^2)!}
	{\frac {\displaystyle (k+m)!} {\displaystyle  2k \cdot (k-1)!}
	\cdot k! (m-k)!} \cdot \frac 1 {|k^2-2|} \\
&= 4 \cdot \frac {k^2! (n-k^2)!} {n!} \cdot \frac {m!^2} {(m+k)! (m-k)!}
	\cdot \frac 1 {|k^2-2|} \\
\intertext{Apply \lemref{lem:binom-bound} and $|k^2-2| \ge k^2/2$, which holds for all integers $k \ge 1$.}
&\le \frac 4 {\binom n {k^2}} \cdot \frac 1 {|k^2-2|}
	\le \frac 4 {\binom n {k^2}} \cdot \frac 1 {k^2/2}
	\le \frac 8 {\binom n {k^2} k^2}.
\end{align*}
Note that if we did not include the number 2 into $\sq$, in \defref{def:P}, then the upper bound on $|P(k^2)|$ would be much weaker, without the factor of $1/k^2$.
\end{proof}

Now we show that a constant fraction of the norm of $P$ comes from the term $P(0)=1$.

\begin{theorem}
\label{thm:sum}
$\norm P_1 < 27$.
\end{theorem}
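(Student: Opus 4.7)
The plan is to exploit the fact that $P$ vanishes identically outside the set $\sq$, so that the full $\ell_1$-norm collapses to a finite sum of at most $m+2$ contributions, each controlled by \lemref{lem:pbound}.

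First I would observe that by \defref{def:P}, the factor $\prod_{i \in [n] - \sq}(x-i)$ forces $P(i) = 0$ for every $i \in [n] \setminus \sq$. Since $\sq = \{0, 1, 4, 9, \dots, m^2\} \cup \{2\}$, the norm reduces to
\[
\norm P_1 \;=\; \sum_{i=0}^n \binom n i |P(i)| \;=\; |P(0)| + \binom n 2 |P(2)| + \sum_{k=1}^{m} \binom n {k^2} |P(k^2)|.
\]
By construction $P(0) = 1$, so the first term contributes $1$. By the first part of \lemref{lem:pbound} the $i=2$ contribution is at most $12$. By the second part of \lemref{lem:pbound}, each term in the remaining sum is bounded by $8/k^2$, giving
\[
\sum_{k=1}^{m} \binom n {k^2} |P(k^2)| \;\le\; 8 \sum_{k=1}^{m} \frac{1}{k^2} \;<\; 8 \cdot \frac{\pi^2}{6}.
\]
Putting the three contributions together,
\[
\norm P_1 \;<\; 1 + 12 + \frac{4\pi^2}{3} \;=\; 13 + \frac{4\pi^2}{3} \;<\; 13 + 14 \;=\; 27.
\]

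There is no real obstacle here: the entire argument is a bookkeeping exercise once one sees that $P$ is supported on $\sq$, and the crucial fact that makes the bound converge is precisely the $1/k^2$ decay supplied by the inclusion of $2$ in $\sq$ (as the author emphasized at the end of the proof of \lemref{lem:pbound}). The only thing worth double-checking is that the $k=1$ case of \lemref{lem:pbound} already covers $\binom n 1 |P(1)| \le 8$, so that $i=1$ does not need separate treatment, and that the numerical slack between $13 + 4\pi^2/3 \approx 26.16$ and $27$ is genuine.
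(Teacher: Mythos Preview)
Your argument is correct and matches the paper's proof essentially line for line: the same reduction to the support $\sq$, the same three-way split $P(0) + \binom{n}{2}|P(2)| + \sum_k \binom{n}{k^2}|P(k^2)|$, the same application of \lemref{lem:pbound}, and the same closing estimate $13 + 8\pi^2/6 < 27$. Your extra remarks about the $k=1$ case and the numerical slack are just added diligence.
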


\begin{proof}
First, use the fact that $P(i)=0$ for $i \in [n]-\sq$, non-square integers $i$ other than 2.
\begin{align*}
\norm P_1
&= \sum_{i=0}^n \binom n i |P(i)| = \sum_{i \in \sq} \binom n i |P(i)| \\
&= P(0) + \binom n 2 P(2) + \sum_{k=1}^m \binom n {k^2} |P(k^2)| \\
\intertext{Now, use $P(0)=1$, \lemref{lem:pbound}, and $\sum_k \frac 1 {k^2} = \frac {\pi^2} 6$.}
&\le 13 + 8 \sum_{k=1}^m \frac 1 {k^2}
< 13 + 8 \frac {\pi^2} 6 < 27.
\qedhere
\end{align*}
\end{proof}

Finally, we are ready to present the dual polynomial for OR.

\begin{theorem}
The $\frac 1 {14}$-approximate degree of OR on $n$ bits is at least $\sqrt n$.
\end{theorem}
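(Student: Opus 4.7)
The plan is to apply \corref{cor:dualsym} with $\eps = 1/14$. Writing OR in its single-variate form $F:[n]\to\PM$, we have $F(0) = +1$ and $F(k) = -1$ for $k \ge 1$. The proposed dual witness is
\[ B(k) = (-1)^k P(k), \]
with $P$ the polynomial of \defref{def:P}. The twist by $(-1)^k$ plays a double role: it converts the degree bound on $P$ into a pure-high-degree bound on $B$, and it aligns the sign of $B(k) F(k)$ so that the two ``surviving'' terms at $k=0$ add up instead of cancel.

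First, I verify the pure-high-degree condition. By the parity-duality observation at the end of Section~2.1, multiplying the multilinear polynomial corresponding to $P$ by the parity $x_1 \cdots x_n$ produces the multilinear polynomial corresponding to $B$, whose pure high degree equals $n - \deg P$. Since $\deg P = n - m - 1$, the pure high degree of $B$ is $m+1 \ge \sqrt n$, as required.

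Second, the $\ell_1$-norm is invariant under the sign twist: $\norm{B}_1 = \sum_{k=0}^n \binom{n}{k}|P(k)| = \norm{P}_1 < 27$ by \thmref{thm:sum}. Third, I compute the inner product by splitting off the $k=0$ term:
\[ B \cdot F = \sum_{k=0}^n \binom{n}{k}(-1)^k P(k) F(k) = P(0) - \sum_{k=1}^n \binom{n}{k}(-1)^k P(k). \]
The one nontrivial ingredient is the identity $\sum_{k=0}^n \binom{n}{k}(-1)^k P(k) = 0$ for any single-variate polynomial of degree strictly less than $n$, since this alternating binomial sum is (up to sign) the $n$-th forward difference of $P$ at $0$ and thus annihilates polynomials of lower degree. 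Because $\deg P = n - m - 1 < n$, the tail sum equals $-P(0) = -1$, so $B \cdot F = 1 - (-1) = 2$.

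Putting the pieces together, $\norm{B}_1 / (B \cdot F) < 27/2 < 14 = 1/\eps$, and \corref{cor:dualsym} yields the stated lower bound. I do not anticipate a real obstacle: \thmref{thm:sum} has already absorbed the quantitative work, and the only slightly delicate point is recognizing the strict degree gap $\deg P < n$ — it is precisely this gap that forces the alternating binomial sum to vanish and converts the normalization $P(0)=1$ into the clean value $B \cdot F = 2$.
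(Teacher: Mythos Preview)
Your proof is correct and follows essentially the same route as the paper: the same dual witness $B(k)=(-1)^kP(k)$, the same pure-high-degree computation, and the same final ratio $\norm{P}_1/(2P(0))<27/2$. The only cosmetic difference is in the justification of $\sum_k \binom{n}{k}(-1)^kP(k)=0$: the paper phrases this as ``$Q$ has no constant Fourier coefficient'' (immediate from pure high degree $\ge 1$), while you invoke the $n$-th forward difference vanishing on polynomials of degree $<n$ --- the two statements are equivalent.
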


\begin{proof}
Consider the polynomial
\[
Q(k) = (-1)^k P(k) \enspace,
\]
that is $P$ from \defref{def:P} multiplied by parity.  We show that $Q$ is a good dual polynomial for OR.  First, the pure high degree of $Q$ is $n-(n-m-1) = m+1 > \sqrt n$.  Second, we compute the ratio from \corref{cor:dualsym}.
Since $\OR(0)=1$ and $\OR(k)=-1$ for $k \ge 1$, $Q \cdot \OR = 2 Q(0) - Q \cdot 1 = 2 Q(0)$, because $Q$ has no constant coefficient.  Now, we use \thmref{thm:sum} to upper-bound the numerator and conclude
\[
\frac {\norm{Q}_1} {Q \cdot \OR}
	= \frac {\norm{P}_1} {2 P(0)}
	< \frac {27} 2
	< 14 \enspace.
\qedhere
\]
\end{proof}

\section{Open problems}


The approximate degree of the $t$-threshold function on $n$ bits is $\Theta(\sqrt{t (n-t)})$ \cite{paturi:degree}.  It would be interesting to find an explicit dual polynomial for the threshold function.  A good candidate may be $Q(k) = (-1)^k P(k)$ with
\[
P(x) = \prod_{i \in [n]-T} (x-t-i) \enspace,
\]
where $T$ is a set of integers that can be written as $k^2 - \ell^2$, where $k \in [\lfloor \sqrt{n-t}\rfloor]$ and $\ell \in [\lfloor \sqrt t \rfloor]$.  Note that $|T| = \Theta(\sqrt{t (n-t)})$.

The approximate degree of the two-level AND-OR tree on $n$ bits (with all gates of fan-in $\sqrt n$) is only known to lie between $O(\sqrt n)$ and $\Omega(\sqrt[3] n)$.  Both bounds have been obtained through quantum algorithms, as follows.
Consider a $T$-query quantum algorithm.  Its acceptance probability on input $x$ can be expressed as a $2 T$-degree polynomial $p$ in the variables $x_1, \dots, x_n$ \cite{bbcmw:polynomialsj}.  If the algorithm computes a function $f$ with bounded error, then $p$ approximates $f$.  Therefore quantum algorithms give approximate polynomials, and approximate degree lower bounds give quantum query lower bounds.
For the two-level AND-OR tree, the upper bound is via a quantum search algorithm on noisy inputs \cite{hmw:berror-search} and the lower bound is via a reduction from the element distinctness problem \cite{as:collision}.  Can one compute the approximate degree of the AND-OR tree by showing a good dual polynomial?

\section*{Acknowledgments}

We thank Harry Buhrman and Mario Szegedy for starting the project, coming up with the crucial ideas, and many fruitful discussions.  We also thank Ronald de Wolf for fruitful discussions, and Troy Lee for proofreading.

\bibliographystyle{alpha}
\bibliography{../quantum}

\end{document}